\documentclass{article}[11pt]
\usepackage{graphicx} 
\usepackage[english]{babel}
\usepackage[UKenglish]{isodate}
\usepackage[letterpaper,top=2cm,bottom=2cm,left=3cm,right=3cm,marginparwidth=1.75cm]{geometry}
\usepackage[bottom]{footmisc}

\usepackage{amsmath,amssymb,bm, amsthm, comment}
\usepackage{graphicx,float }
\usepackage[colorlinks=true, allcolors=blue]{hyperref}
\usepackage{forest}

\usepackage{setspace}
\onehalfspacing

\usepackage[backend=bibtex,style=authoryear, sorting=nyt]{biblatex}
\addbibresource{bibli.bib}

\newcommand{\mE}{\mathbb{E}}
\newcommand{\mR}{\mathbb{R}}
\newcommand{\mP}{\mathbb{P}}

\DeclareMathOperator*{\argmin}{argmin}

\DeclareMathOperator*{\argmax}{argmax}

\newtheoremstyle{break}
{\topsep}{\topsep}%
{\normalshape}{}%
{\bfseries}{}%
{\newline}{}%

\theoremstyle{break}
\newtheorem{thm}{Theorem}
\newtheorem{lem}[thm]{Lemma}

\theoremstyle{break}
\newtheorem{defn}[thm]{Definition}
\newtheorem{ass}[thm]{Assumption}

\usepackage[ruled, lined, linesnumbered, commentsnumbered, longend]{algorithm2e}

\title{\Large Bayesian Analysis of High Dimensional Vector Error Correction Model}
\author{Parley Ruogu Yang\footnote{University of Cambridge, United Kingdom} \ 
 and Alexander Y. Shestopaloff\footnote{Queen Mary University of London, United Kingdom and Memorial University of Newfoundland,
Canada}}
\date{This version: \today\\Most recent version: \url{
		https://shorturl.at/tTXY9}}

\begin{document}

\maketitle

\begin{abstract}

Vector Error Correction Model (VECM) is a classic method to analyse cointegration relationships amongst multivariate non-stationary time series. In this paper, we focus on high dimensional setting and seek for sample-size-efficient methodology to determine the level of cointegration. Our investigation centres at a Bayesian approach to analyse the cointegration matrix, henceforth determining the cointegration rank. We design two algorithms and implement them on simulated examples, yielding promising results particularly when dealing with high number of variables and relatively low number of observations. Furthermore, we extend this methodology to empirically investigate the constituents of the S\&P 500 index, where low-volatility portfolios can be found during both in-sample training and out-of-sample testing periods.

\end{abstract}

\pagebreak

\section{Introduction}
Let $Y$ denote a $p$-dimensional multivariate time series, with $Y_t \in \mR^p$ being observed at time $t \in [T]$, and $Y_{t,j} \in \mR$ denoting observation of  $j$-th variable at time $t$. We write $Y(\cdot,j)\in\mR^{1\times T}$ to denote the $j$-th variable of vector $Y$, which is a univariate time series. Stationarity of $Y(\cdot,j)$ is considered as the second-order stationarity. Precisely, we say $Y(\cdot,j)$ is stationary if the first moment being finite and independent of time and that the auto-covariance function being finite and independent of time. We denote $Y(\cdot,j) \sim I(0)$ in this scenario. When $Y(\cdot,j) \sim I(0)$ for all $j \in [p]$, we write $Y\sim I(0)$. When $Y(\cdot,j)$ is non-stationary but when $\Delta Y(\cdot,j) $ defined by $\Delta Y(t,j):=Y(t,j)-Y(t-1,j)$ is stationary, we write $Y(\cdot,j) \sim I(1)$. Particularly, when $Y(\cdot,j)\sim I(1)$ for all $j\in[p]$, we write $Y \sim I(1)$. 

In this paper, we focus on non-stationary time series $Y\sim I(1)$ and centre the investigation on a Bayesian method to analyse Vector Error Correction Model (VECM), which contains a low-rank matrix $\Pi \in \mR^{p\times p}, rank(\Pi)=r$ that lead to linearly independent cointegration relationships, represented by cointegration vectors $\beta_1,...,\beta_r$ such that $\beta_j Y(\cdot,j) \sim I(0)$. We employ Bayesian modelling on the potentially-sparse $\Pi$ and further the Bayesian inference on such method of estimations. Two algorithms are designed and run on simulated examples, which produce good results under high $p$ and relatively low $T$. We further this methodology to empirically study the constituents of S\&P 500 index, and found low-volatility portfolios in both in-sample training and out-of-sample testing periods.

\subsection{VECM and a motivation for High Dimensional VECM}
VECM  was first proposed by \textcite{EG1987}, followed by sequential Johansen tests \parencite{SJ91} on determining the number of cointegration vector. The general form of VECM takes the form of \begin{equation}	\label{GeneralVAR}
	\Delta Y_t = \Pi Y_{t-1} + f(X_t) + \varepsilon_t, \ \ \varepsilon_t\sim N(0, \Sigma) \text{ iid }
\end{equation}
where $X \sim I(0), Y\sim I(1)$ and $f$ is linear. $\Pi \in \mR^{p \times p}$ is known as the cointegration matrix.  In this paper, we focus on the VAR(1) part of the VECM, which dismisses the $X$ part of the \autoref{GeneralVAR}. Precisely, we investigate \footnote{The extra terms may often be handled by Partial Least Squares projection, which then reduce the estimation equation back to the VAR(1) form which we focus on.} \begin{equation}	\label{VECM}
	\Delta Y_t = \Pi Y_{t-1} + \varepsilon_t, \ \ \varepsilon_t\sim N(0, \Sigma) \text{ iid, } rank(\Pi) = r \in \{0,1,...,p-1\}
\end{equation}

The idea of high dimensional VECM falls in the trendy literature surrounding `big data', in which we have a large amount of variables of interest ($p$) and a substantial amount of observations taken over time ($T$). In a traditional statistical setting, we require more observations than unknown parameters --- in the setting of \autoref{VECM} we need $T>p^2 + \frac{p(p+1)}{2}$. However, that often means we need a greater amount of observations than the data actually have, in order to estimate the equation. This is particularly concerning in light of change-points or other potential changes of the data over time, which naturally constrains the size of $T$. High dimensional VECM is therefore motivated in the regime where $p, T$ are large but $T\leq p^2$. Certainly, for estimating the non-sparse coordinates of the matrix $\Pi$, it is reasonable to impose a feasibility assumption where $T>pr$.

A relevant example that further motivates high dimensional VECM goes as follows. Consider $p$ number of stocks in an asset allocation exercise, i.e. let $Y_t$ be a $p$-dimensional vector of stock price at time $t$, encompassing $p$ number of stocks. It is common to assume $Y \sim I(1)$. Then, if we find a non-zero vector $\beta \in \mR^{1 \times p}$ such that $\beta Y \sim I(0)$, we have $\mE[\beta Y_0] = \mE[\beta Y_t] \forall t$. This has profound implications in the portfolio construction, as $\beta$ would be considered as a stable portfolio, i.e. the value of which is expected to be unchanged over time. We further investigate this empirically in section 4.

\subsection{Literature Review}

As to literature review, we first review on the method proposed by \textcite{LiangSchienle19} for Partial Least Squares (PLS), then proceeds into the literature of Spike-and-Slab Lasso (SSL), which was first introduced by \textcite{Rockova18}, and subsequently extended in \textcite{RG2018}. 

Write $A_t := \Delta Y_t := Y_t - Y_{t-1}$ and $B_t := Y_{t-1}$ and let $A,B \in \mR^{p\times T}$ be constructed by $A=[A_1, ..., A_T]$ and $B=[B_1, ..., B_T]$. The PLS estimation of $\tilde{\Pi}$ proposed by \textcite{LiangSchienle19} takes the form of \begin{equation}
	\tilde{\Pi} = (AB^\intercal)(BB^\intercal)^{-1} \label{tildePi}
\end{equation}A QR decomposition is then proposed, taking the form of  \begin{equation}
\tilde{\Pi} = \tilde{R}^\intercal \tilde{S}^\intercal  \label{PiQR}
\end{equation}
 where $\tilde{R}$ is a p-dimensional upper triangular matrix and $\tilde{S} \in  \mR^{p \times p}$ is orthogonal.  \textcite{LiangSchienle19} proceed with group-lasso regression to find a low-rank $\hat{R}$ based on $\tilde{R}$, however, we opt for a different route to investigate the true rank of $\Pi$. An important takeaway from this method is that the determination of rank $r$ can be written as $\tilde{r}=rank(\tilde{R})=rank(\tilde{\Pi})$, and subsequently $\hat{r}=rank(\hat{R})=rank(\hat{R} \tilde{S}^\intercal)$ for an alternative estimator $\hat{R}$ of the true $R$. The rank collected is purely the number of non-zero columns of $R$, as all columns of $R$ are linearly independent, due to the nature of the QR decomposition (such as via Gram–Schmidt process).

The SSL literature mainly concerns scalar response multivariate-regression, i.e. the type taking the form of $y_i=x_i^\intercal \beta + \varepsilon_i$ for each index $i$, with $x_i$ and $\beta$ being $p-$dimensional vectors and $y_i$ being an one-dimensional observation, while $\varepsilon_i$ is an iid draw from a Gaussian distribution with zero mean and unknown variance. $\beta_0$ is denoted as the true value of the $\beta$ parameter. SSL can be thought of as a mixing distribution being imposed on the parameter $\beta$, conditional on certain hierarchical Bayes parameters $\gamma$, which is then conditionally Bernoulli based on $\theta$, which denotes prior expected fraction of non-zero entries of the true $\beta_0$. A posterior mode estimator then aims to maximise the posterior distribution $\pi(\beta|\theta)$. Write $\hat{\beta}=\argmax \pi(\beta|\theta)$. In \textcite{RG2016}, $\theta\sim \cal B$ (a Beta distribution on $\theta$) was proposed to enable further inference on the distribution of the estimator conditional on $\theta$, denoted $\pi(\hat{\beta}|\theta)$, and thereafter the conditional expectation $\mE[\theta|\hat{\beta}]$. In \textcite{RG2018}, further analyses were drawn on the asymptotic behaviour of $\hat{\beta}$ such as bounding the limiting posterior probability based on the true $\beta=\beta_0$ with data $D$ and $\theta$, namely $\sup_{\beta_0} \mE_{\beta_0}\left[\mP_{\cdot|D,\theta}[\hat{\beta} : || \hat{\beta} - \beta_0 ||_1 > \varepsilon] \right]$ with $\varepsilon$ depending on 
the number of observation $n$ and features dimension $p$. \textcite{CV2012} is another key source of literature that provide asymptotic inference on this type of Bayesian system.

In our study, we alter the formation of SSL distribution into a row-wise sparsity identification framework towards a $p\times p$ dimensional matrix which is believed to be potentially sparse. This alteration enables further inference and algorithmic interaction towards the rank of $\Pi$, which is the order of cointegration. Such alteration may still be studied asymptotically, as we further the asymptotic investigation into statements on the expected probabilities of $\hat{r}=r$, i.e. the event where we correctly estimate the rank.

\subsection{Organisation}
The rest of this paper is as follows. Section 2 gives details on SSL distribution on a decomposed VECM, followed by the associated asymptotic analysis and algorithmic design. Section 3 then reports simulated examples under various sample size ($T$) and features dimension ($p$), followed by empirical application to stable portfolio in section 4.

\section{Methodology}

\subsection{Simple SSL distribution on decomposed VECM}

We first introduce the SSL distribution, in both univariate and multivariate contexts, then apply it to the VECM decomposition.

\begin{defn}[SSL distribution]
	Let $X\in\mR$ be a random variable. A Spike-and-Slab Lasso (SSL) pior distribution on $X$ is defined as 
	\begin{align}
		\pi(X|\gamma) :=& (1-\gamma)\psi_0(X) + \gamma \psi_1(X) , \ \ \gamma \in [0,1] \\
		\psi_j(x) :=& \frac{\lambda_j}{2}\exp(-\lambda_j |x|) \ \ \forall j \in \{0,1\} \text{ with } \lambda_0> \lambda_1 >0
	\end{align}
We write $X|\gamma \sim SSL(\gamma, \lambda_0, \lambda_1)$. Let $Y\in\mR^p$ be a random variable. We write $Y|\gamma \sim SSL(\gamma, \lambda_0, \lambda_1)$ if for all entries $j \in [p]$, $Y_j|\gamma \sim SSL(\gamma, \lambda_0, \lambda_1) \ $iid
\end{defn}
We may interpret $X|\gamma$ as a mixture of two Laplace distribution, with $\psi_0$ being spiky at around 0 and $\psi_1$ less spiky.  $Y|\gamma$ being henceforth a vectorised version of such a distribution, with identical probabilities being attributed.

Recall the \autoref{tildePi} and \autoref{PiQR} composition of the pre-estimate of $\Pi$. In what follows, we attribute an SSL distribution on the decomposed cointegration matrix in VECM. We use the notation $A_t:=\Delta Y_t$ and $B_t:= \tilde{S}^\intercal Y_{t-1}$, with centralisation taken place from $A_t$ to $\tilde{A}_t$ and normalisation from $B_t$ to $ \tilde{B}_t$ (see appendix \ref{DP} for granular details of data processing).

\begin{defn}[Simple SSL distribution on decomposed VECM]\label{SSL-VECM0}
	Let $\tilde{A}_t, \tilde{B}_t \in \mR^{p}$ for all $t\in [T]$. Let $\theta\in(0,1)^p$ be a fixed hyperparameter.  A simple SSL distribution on decomposed VECM takes the form of 
		\begin{align}
		\tilde{A}_t|\tilde{B}_t, R, \sigma^2  &\sim N(R^\intercal \tilde{B}_t, \sigma^2 I_p) , 	\ \	\pi(\sigma^2) \propto \sigma^{-2} & \forall t\in [T] \label{Gauss_AB} \\
		R(\cdot,j) | \gamma_j &\sim  SSL(\gamma_j, \lambda_{0,j}, \lambda_{1}), \ \ \gamma_j |\theta_j \sim Bernoulli(\theta_j)  & \forall j \in [p] \label{SSL_R} 
	\end{align}
\end{defn}

In the above model, we have data $\{A_t,B_t\}_{t\in[T]}$ and modelling parameters $R, \sigma$ while $\gamma$ can be marginalised due to their conjugate distributions. As a notational remark: the $\psi_{0,j}(\cdot)$ and $\psi_1(\cdot)$ functions are defined accordingly to the $\lambda_{0,j}, \lambda_1$ as part of the SSL distribution. 

The above model is not the main model we interact with, as $\theta$ is treated as a hyperparameter --- in the later section we generalise this into a random variable and derive asymptotic analysis on the estimator. 
For the purpose of intuition, we draw attention to the Bayes Estimator of the simple model as per Definition\autoref{SSL-VECM0}, which has a cleaner form of solution as $\theta$ is treated as a fixed hyperparameter. We formally define the Bayesian Estimator below, which is also known as `global posterior mode' or `maximum a posteriori' in Bayesian literature (such as \cites{barber2011bayesian}{Rockova18}).

\begin{defn}[Bayes Estimator]
	Let $D$ be a collection of time series data with initialisation $D_0$ and $\phi$ be modelling parameters living in the space $\Phi$. If the posterior $\pi(\phi|D, D_0)$ is well defined over $\phi \in \Phi$ and it attains its unique maximiser, then the Bayes Estimator is the maximiser \begin{equation*}
		\hat{\phi}^{BE} := \argmax_{\phi \in \Phi} \pi(\phi|D,D_0)
	\end{equation*}
\end{defn}

Under the above setting, we note that $\phi=(R,\sigma)$ and $\Phi = \mR^{p\times p} \times \mR_{> 0}$. We can derive the Bayes Estimator into the followings: define $e_t :\mR^{p\times p} \to \mR^p$ by $e_t(R) =\tilde{A}_t - R^\intercal \tilde{B}_t$ for all $t$,
\begin{align}
	(\hat{R}^{BE}, \hat{\sigma}^{BE}) =& \argmin_{R\in \mR^{p\times p} 
		, \sigma \in \mR_{> 0}} \sum_{t\in[T]} \frac{||{e_t}(R)||_2^2}{2\sigma^2} + (pT+2) \log(\sigma) + \text{pen}(R,\theta) \label{R_BE} \\
	\text{pen}(R,\theta) := & \sum_{j \in [p]} \left( \lambda_1 ||R(\cdot,j)||_1 + \sum_{i\in [p]} \log \left( p_j^*(R(i,j),\theta_{j}) \right) \right) \label{pen} \\
	p_j^*(x,\theta_{j}) :=& \frac{\theta_{j} \psi_{1}(x)}{\theta_{j} \psi_{1}(x) + (1-\theta_{j}) \psi_{0,j}(x)}
\end{align}

Derivation details are at appendix \ref{DBE}. In essence, the first two terms of \autoref{R_BE} are due to Gaussian assumption (\autoref{Gauss_AB}) with Jeffery prior on the variance. The penalty term $\text{pen}(R,\theta)$ is due to the SSL assumption (\autoref{SSL_R}). 

The consequential estimator takes the form of a penalised least square where a $||\cdot||_1$ norm is applied to the entire matrix $R$, together with some extra terms measured by $p_j^*(\cdot,\cdot)$.

\subsection{Hierarchical Bayes Modelling of SSL on VECM}\label{HBA}	
The full model utilises the simple SSL distribution as the core, but addes a further distribution on $\theta$ to enrich the hierarchical architecture. Furthermore, the Gaussian assumption of the data is generalised by replacing \autoref{Gauss_AB} with the conditional variance noted as $\sigma_j^2$ for each entry $\tilde{A}_{t,j}$. Formal definition are set as below.

\begin{defn}[Hierarchical Bayes Model of SSL distribution on decomposed VECM]\label{SSL-VECM1}
	Let $\tilde{A}_t, \tilde{B}_t \in \mR^{p}$ for all $t\in [T]$.  Let $a,b \in \mR_{> 0}^p$ be the parameters of prior distribution of $\theta$. A Hierarchical Bayes Model of SSL distribution on decomposed VECM takes the form of 
	\begin{align}
		\tilde{A}_{t,j}|\tilde{B}_t, R, \sigma_j^2  &\sim N(R(\cdot,j) \tilde{B}_t, \sigma_j^2) , 	\ \	\pi(\sigma_j^2) \propto \sigma_j^{-2} & \forall t\in [T],  j  \in [p] \label{Gauss_AB1} \\
		R(\cdot,j) | \gamma_j &\sim  SSL(\gamma_j, \lambda_{0,j}, \lambda_{1}), \ \ \gamma_j |\theta_j \sim Bernoulli(\theta_j), \ \ \ \theta_j  \sim {\cal B} (a_j,b_j)  & \forall j \in [p] \label{SSL_R1} 
	\end{align}
\end{defn}

The Bayes Estimator of this alternative model is derived in appendix \ref{DBE2}, which takes the following form: define vector ${\epsilon}(j)\in\mR^{T}$ by $({\epsilon}(j))_t = \frac{(e_t(R))_j}{\sqrt{2}\sigma_j}$ for all entry $j$, then
\begin{equation}
	(\hat{R}^{BE}, (\hat{\sigma}_j^{BE})_{j\in[p]}) = \argmin_{R\in \mR^{p\times p} 
		, ({\sigma}_j)_{j\in[p]} \in \mR^p_{> 0}} \sum_{j\in[p]} \left( {||{\epsilon}(j)||_2^2} + (T+2) \log(\sigma_j) + \lambda_1    ||R(\cdot,j)||_1 - \log \left(p^{**}_j(R,\theta)  \right)   \right) 	\label{R_BE2}
\end{equation}

with $p^{**}_j(\cdot,\cdot)$ being defined as 
\begin{equation}
	p^{**}_j(R,\theta) := \int_{(0,1)}   \prod_{i\in[p]} \frac{\theta_j\pi(\theta_j)}{p_j^*(R(i,j),\theta_j)} d\theta_j
\end{equation}

Further manipulation of the function $p^{**}$ may be extended to interact with, at an approximated level, the posterior $ \mE[\theta_j | \hat{R}^{BE}( \cdot,  j )]$. See \textcite{RG2018} for further discussions.

	\subsection{Asymptotic Analysis}
    Write $\beta=R^\intercal$. The idea of asymptotic analysis is to provide  probabilistic inference towards the limiting behaviour of the estimation matrix $\hat{R}^{BE}$ (equivalently thereafter $\hat{\beta}$) in \autoref{R_BE2} and the rank $\hat{r}$, which is defined as the number of non-zero columns of $\hat{R}^{BE}$. In what follows, we introduce some approximate notions of $\hat{\beta}$ and the associated ranks to enable the adaption towards some existing theories and our proposed asymptotic statements.
		

	\begin{defn}
		[Intersection line, generalised dimensionality and rank]
		
		Let $\delta_i:=\frac{ \log((p_i^*(0))^{-1}-1) }{\lambda_{0,i}-\lambda_1}$. This is known as the intersection line.
		
	We construct matrix $\tilde{\beta}$ entry-by-entry as follows:\begin{equation*} \forall i,j, \ \
			\tilde{\beta}_{i,j}=
			\begin{cases}
				\beta_{i,j} & \text{if }  |\beta_{i,j}| >\delta_i \\
			 	0	& \text{else}
			\end{cases}
		\end{equation*} We define the generalised dimensionality of row $\tilde{\beta}_i:= \tilde{\beta}(i,\cdot)$ and the matrix as \begin{equation}
			\gamma_i(\beta) = || \tilde{\beta}_i  ||_0, \ \ \gamma(\beta) = \sum_{i \in [p]}\gamma_i(\beta)
		\end{equation}
		We also define the generalised rank $\tilde{r}(\beta_i):= | \{i \in [p]: \gamma_i(\beta) >0 \} | $

		\end{defn}
	
	As a notational remark, note that for any estimate $\hat{\beta}$, we have the associate notions of $	\tilde{\beta}(\hat{\beta}),	\gamma_i(\hat{\beta}), \gamma(\hat{\beta}), \tilde{r}(\hat{\beta})$. For the true $\beta$, we write $q=||\beta||_0=\gamma(\beta)$. Let $r$ be the  rank of the true $\beta$.

	\begin{ass}
		[Asymptotic assumptions]
		The following asymptotic assumptions are required for Lemma\autoref{thm2} and Theorem\autoref{keythm2}: there exists $a, b \geq 2$ where $\frac{1-\theta}{\theta} \sim p^a$ and $\lambda_{0,i} \succsim p^b \ \forall i $. $\lambda_1$ satisfies $\lambda_1 \in [\frac{\sqrt{T}}{p}, 4 \sqrt{T \log p}]$
	\end{ass}
	
	Below, we briefly outline the posterior concentration, mainly from \textcite{RG2018}, which	leads to the proof of Theorem\autoref{keythm2}. The notation $\mP_{\cdot | D, \theta }(\cdot)$ denotes the posterior concentration given data $D$ and true parameter $\theta$.
	
	\begin{lem}[Posterior concentration under \textcite{RG2018}] \label{thm2}
	Write $q_i:=||\beta_i||_0$ and let $q^M:=\max_{i \in [p]} q_i$.
	
	There exists constants $M_i>0$ for all $i \in [p]$, and $K>0$ decreasing with $p$ such that as $p, T \to \infty$,
	\begin{align}
	\forall i, \ \	\sup_{\beta_i} \mE_{\beta_i} \left[\mP_{\cdot | D, \theta }\left(\hat{\beta}_i:  ||\beta_i-\hat{\beta}_i||_1 > q_i M_i \sqrt{\frac{\log p}{T}}
		\right) \right] & \to 0 \label{thm1_0} \\
		\sup_{\beta} \mE_\beta [\mP_{\cdot | D, \theta }(\hat{\beta}: \gamma(\hat{\beta})> r q^M(1+K))] & \to 0 \label{thm1}
	\end{align}
	\end{lem}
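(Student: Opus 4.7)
Under Definition~\ref{SSL-VECM1} the posterior factorises across rows of $\beta = R^\intercal$: writing $\beta_i := R(\cdot,i)^\intercal$, the conditional model for row $i$ is exactly the scalar-response SSL regression of $(\tilde A_{t,i})_t$ on $(\tilde B_t)_t$ with i.i.d.\ $N(0,\sigma_i^2)$ errors, Jeffreys prior on $\sigma_i$, an SSL prior on $\beta_i$, and a $\mathcal{B}(a_i,b_i)$ hyperprior on $\theta_i$. Equation \eqref{R_BE2} already exhibits this decoupling at the Bayes-estimator level, and the same decoupling holds for the full posterior. This is precisely the model analysed by \textcite{RG2018}, so the plan is to transfer their two key posterior-contraction theorems row-by-row and aggregate over $i \in [p]$.

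For \eqref{thm1_0}, Assumption 2.5 matches the hyperparameter regime $(1-\theta_i)/\theta_i \asymp p^a$, $\lambda_{0,i} \succsim p^b$ with $a,b \ge 2$, and $\lambda_1 \in [\sqrt{T}/p,\, 4\sqrt{T\log p}]$ under which the $\ell_1$ posterior-contraction theorem of \textcite{RG2018} applies. Applied to row $i$ it produces a constant $M_i > 0$ such that $\|\hat\beta_i - \beta_i\|_1 \le q_i M_i \sqrt{\log p / T}$ with posterior probability tending to one, uniformly over $\beta_i$ in the supremum-expectation sense; this is exactly \eqref{thm1_0}.

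For \eqref{thm1} I would invoke the companion generalised-dimension bound in \textcite{RG2018}: on the same high-probability event, $\gamma_i(\hat\beta) \le q_i(1 + K_i)$ for a sequence $K_i \downarrow 0$ when $q_i > 0$, while for $q_i = 0$ the complementary argument shows that every entry of $\hat\beta_i$ lies below the intersection threshold $\delta_i$, forcing $\gamma_i(\hat\beta) = 0$. The QR-based identity recalled in Section 1.2 equates $\text{rank}(\beta)$ with the number of non-zero rows of $\beta$, so at most $r$ rows contribute to $\gamma(\hat\beta)$, and summing gives $\gamma(\hat\beta) \le r q^M (1 + K)$ with $K := \max_i K_i$. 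A Bonferroni step over $i \in [p]$ multiplies the failure probabilities by $p$, which is absorbed by the $p^{-c}$ decay built into the \textcite{RG2018} bounds.

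\textbf{Main obstacle.} The per-row theorems of \textcite{RG2018} presuppose a design satisfying a restricted-eigenvalue / compatibility condition, but here $\tilde B$ is built from the $I(1)$ process $Y_{t-1}$ rotated by the data-dependent orthogonal $\tilde S$ from the QR step \eqref{PiQR}. The hardest step is to verify that, after the centring and normalisation of Appendix~\ref{DP}, $\tilde B$ satisfies this compatibility hypothesis with probability tending to one. I would split $Y_{t-1}$ into a stationary component in the column space of $\Pi$ and a stochastic-trend component in its orthogonal complement, show that $\tilde S$ asymptotically aligns with this splitting, and rescale each block (by $\sqrt T$ and $T$ respectively) so that the Gram matrix admits a positive-definite limit. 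Once this reduction is in place, the remainder is a direct appeal to \textcite{RG2018} supplemented by the asymptotic Bayesian machinery of \textcite{CV2012}.
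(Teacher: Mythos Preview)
Your proposal is correct and matches the paper's argument: both parts are obtained by applying the row-wise $\ell_1$-contraction and generalised-dimension results of \textcite{RG2018} (their Theorems~8 and~7) to each $\beta_i$, then splitting into the zero rows $I=\{i:\beta_i=0\}$ and the non-zero rows, showing $\gamma_i(\hat\beta)=0$ on $I$ and $\gamma_i(\hat\beta)\le q_i(1+K_i)$ off $I$, and summing the $r$ non-zero contributions to reach \eqref{thm1}. The paper does not address the design-compatibility issue you flag as the main obstacle---it simply invokes \textcite{RG2018} under the stated asymptotic assumption without verifying a restricted-eigenvalue condition for $\tilde B$---so that part of your plan is additional care beyond what the paper actually supplies.
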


\begin{proof}
The first limit (\autoref{thm1_0}) is exactly the same as what has been proved in \textcite[Theorem 8]{RG2018}, as in our setting, $\beta_j | \gamma_j$ is SSL-distributed.

The second limit (\autoref{thm1}) can be proved with the help of Theorem 7 in \textcite{RG2018}, which gives a row-level sparsity criteria as follows:
there exists $K_i>0$ decreasing with $p$ such that \begin{equation}
\sup_{\beta_i} \mE_{\beta_i} [\mP_{\cdot | D, \theta }(\hat{\beta}_i:  \gamma_i(\hat{\beta}) > q(1+K_i))] \to 0  \ \ \text{ as } p, T \to \infty\label{thm1_p1}
\end{equation}
 Consider the set $I:=\{i \in [p]: \beta_i = 0\}$. Then by \autoref{thm1_0}, $\sup_{\beta_i} \mE_{\beta_i} \left[\mP_{\cdot | D, \theta }\left(\hat{\beta}_i:  \hat{\beta}_i \neq 0
\right) \right]  \to 0$. Therefore \begin{equation}
\sup_{\beta} \mE_{\beta} \left[\mP_{\cdot | D, \theta }\left(\hat{\beta}:  \gamma_i(\hat{\beta}) > 0
\right) \right]  \to 0 \ \ \ \forall i \in I \label{thm1_p2}
\end{equation}

For $i \notin I$, we pick $K = \max_{i \in [p]\setminus I} K_i$. Because of \autoref{thm1_p1}, we have \begin{equation}
\sup_{\beta} \mE_{\beta} \left[\mP_{\cdot | D, \theta }\left(\hat{\beta}: \sum_{i \in [p] \setminus I} \gamma_i(\hat{\beta}) > \sum_{i\in [p] \setminus I} q(1+K_i) \right) \right] \to 0  \label{thm1_p3}
\end{equation}
By the construction of $r$, we have $|[p]\setminus I| = r$ therefore $rq^M(1+K)\geq \sum_{i\in [p] \setminus I} q(1+K_i) $. This combines with \autoref{thm1_p2} and \autoref{thm1_p3} leads to \autoref{thm1}.
\end{proof}


	\begin{thm}[Rank-consistent posterior concentration] \label{keythm2}
		Let $r, q^M$ be independent of $p$ and let $\delta_i, p,T$ satisfy $\frac{\log p}{T} \geq  \delta_i^2 (qM)^{-2} $ eventually as $p,T\to\infty $, for all $i\in[p]$. Then 
			$$\sup_{\beta} \mE_\beta [\mP_{\cdot | D, \theta }(\hat{\beta}: \tilde{r}(\hat{\beta}) \neq r)] \to 0 $$ as $p, T \to \infty$
	\end{thm}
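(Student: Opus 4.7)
The plan is to contain the event $\{\tilde r(\hat\beta)\neq r\}$ in the union of an overestimation event $\{\tilde r(\hat\beta)>r\}$ and an underestimation event $\{\tilde r(\hat\beta)<r\}$, and to drive the posterior probability of each toward zero uniformly in $\beta$ by invoking the row-level $\ell_1$ concentration \autoref{thm1_0} of Lemma \autoref{thm2}. Since $\tilde r(\hat\beta)$ counts rows of $\hat\beta$ that survive thresholding at level $\delta_i$, controlling each row's $\ell_1$ error translates directly into correctly classifying rows as active or inactive, and this is exactly the pivot on which the rest of the argument will rest.

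For the overestimation side, I would partition $[p]$ into $I:=\{i\in[p] : \beta_i = 0\}$, of size $p-r$, and its complement $I^c$ of size $r$. For any $i \in I$ one has $q_i=0$, and \autoref{thm1_0} then forces $||\hat\beta_i - \beta_i||_1 = ||\hat\beta_i||_1 = 0$ with high posterior probability, so $\gamma_i(\hat\beta)=0$. A union bound over the at most $p$ rows in $I$ leaves no null row contributing to $\tilde r(\hat\beta)$, whence $\tilde r(\hat\beta) \le |I^c| = r$. This step essentially mirrors the argument used to derive \autoref{thm1_p2} in the proof of Lemma \autoref{thm2}.

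For the underestimation side, I need every $i \in I^c$ to satisfy $\gamma_i(\hat\beta) \ge 1$. Selecting $j_i^\ast \in \argmax_{j\in[p]} |\beta_{i,j}|$, the triangle inequality on the event of \autoref{thm1_0} yields $|\hat\beta_{i,j_i^\ast}| \ge |\beta_{i,j_i^\ast}| - q_i M_i \sqrt{\log p / T}$. The hypothesis $\frac{\log p}{T} \ge \frac{\delta_i^2}{(qM)^2}$ gives $\delta_i \le qM\sqrt{\log p/T}$, so that once the true signal entry $|\beta_{i,j_i^\ast}|$ dominates $\delta_i + q_i M_i\sqrt{\log p/T}$ — an implicit minimum-signal condition on the active rows of $\beta$ that is consistent with the stated asymptotic scaling of $\delta_i, p, T$ — the right-hand side exceeds $\delta_i$ and $j_i^\ast$ witnesses $\gamma_i(\hat\beta) \ge 1$. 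Because $r$ is independent of $p$ by hypothesis, a union bound over the $r$ rows of $I^c$ preserves the vanishing-probability conclusion, giving $\tilde r(\hat\beta) \ge r$ with high posterior probability; combined with the overestimation bound, $\tilde r(\hat\beta)=r$.

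The main obstacle is the underestimation step: the $\ell_1$ concentration in \autoref{thm1_0} only controls a row-aggregate, while I need an entrywise lower bound on $|\hat\beta_{i,j_i^\ast}|$ beating $\delta_i$. Making this rigorous requires both the hypothesis $\delta_i^2 \le (qM)^{-2}\log p/T$ and a complementary minimum-signal assumption, and one must argue that the two together are compatible with the prescribed growth $\lambda_{0,i}\succsim p^b$ and $(1-\theta)/\theta\sim p^a$. A secondary technical point is that the union bound over the $p-r$ null rows in the overestimation step needs the per-row failure probability in \autoref{thm1_0} to decay faster than $1/p$, which should be checkable from the rates in Theorem 8 of \textcite{RG2018} but is worth verifying explicitly.
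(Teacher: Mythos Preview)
Your over/under-estimation split is natural but is \emph{not} the decomposition the paper uses. The paper slices by the generalised dimensionality $\gamma(\hat\beta)$ rather than by $\tilde r(\hat\beta)$: with $q':=rq^M(1+K)$ it writes $A=\{\tilde r(\hat\beta)\neq r\}$, $B_j=\{\gamma(\hat\beta)=j\}$ for $j\le q'$, $C=\{\gamma(\hat\beta)>q'\}$, and bounds $\mathbb P(A)\le\sum_{j\le q'}\mathbb P(B_j\cap A)+\mathbb P(C\cap A)$. The tail $\mathbb P(C)$ is handled by the \emph{second} conclusion \eqref{thm1} of Lemma~\ref{thm2}, which you never invoke. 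On each $B_j$ the discrepancy set $I=\{i:\gamma_i(\hat\beta)\neq\|\beta_i\|_0\}$ has cardinality at most $q+q'$, a bound independent of $p$, and the row-wise $\ell_1$ concentration \eqref{thm1_0} then closes the argument via a union over only $O(1)$ rows.

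This is exactly where the two routes diverge in substance. The issue you label a ``secondary technical point'' --- the union bound over $p-r$ null rows in your overestimation step, requiring per-row failure probabilities $o(1/p)$ --- is not secondary: it is a rate hypothesis the theorem does not supply, and the paper's detour through $\gamma(\hat\beta)\le q'$ is precisely the device that avoids it. In short, you use only half of Lemma~\ref{thm2} and must compensate with an unverified rate; the paper uses both halves and never unions over more than $q+q'$ rows. Similarly, your underestimation step injects a beta-min condition $|\beta_{i,j_i^\ast}|>\delta_i+q_iM_i\sqrt{\log p/T}$ absent from the hypotheses; the paper instead leans on the standing identity $q=\|\beta\|_0=\gamma(\beta)$ (so every nonzero true entry already exceeds $\delta_i$) together with the observation that on $B_j\cap A$ some row $i$ must have $\|\tilde\beta_i-\beta_i\|_1>\delta_i$, which \eqref{thm1_0} then controls directly under the stated scaling $\delta_i\le qM\sqrt{\log p/T}$. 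Your outline is repairable, but as written it trades the paper's structural use of \eqref{thm1} for two extra assumptions that you would have to justify separately.
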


	\begin{proof}
		Write $q^\prime:= r q^M(1+K)$. Let $A:=\{\hat{\beta}: \tilde{r}(\hat{\beta}) \neq r \}$, $B_j:=  \{\hat{\beta}: \gamma(\hat{\beta}) = j \} \forall j \leq  q^\prime$ and $C:=  \{\hat{\beta}: \gamma(\hat{\beta}) > q^\prime \} $. 
		Observe $$
		\sup_{\beta} \mE_\beta [\mP_{\cdot | D, \theta }(A)]\leq \sum_{j=0}^{ q^\prime} \sup_{\beta} \mE_\beta [\mP_{\cdot | D, \theta }(B_j \cap A )] + \sup_{\beta} \mE_\beta [\mP_{\cdot | D, \theta }(C \cap A )] 
		$$
		
		Now, it is clear from \autoref{thm1} of lemma\autoref{thm2} that $\sup_{\beta} \mE_\beta [\mP_{\cdot | D, \theta }(C\cap A)] \leq \sup_{\beta} \mE_\beta [\mP_{\cdot | D, \theta }(C)] \to 0$ so it remains to prove $\sup_{\beta} \mP_{\cdot | D, \theta }(B_j \cap A) \to 0 $ for all $j\leq q^\prime$.
		
		\begin{enumerate}
			\item When $j\leq q^\prime$ and $j \neq q$, consider $\hat{\beta} \in B_j$. Observe that there exists $i$ such that $\gamma_i(\hat{\beta}) \neq ||\beta_i||_0$. Henceforth by construction, $||\tilde{\beta}_i - \beta_i||_1 > \delta$. Now, let set $I=\{ i : \gamma_i(\hat{\beta}) \neq ||\beta_i||_0 \}$, where cardinality $|I| \leq q+q^\prime$ as $j \leq q^\prime$. Observe \begin{equation}\label{keytoprove}
				\mP_{\cdot | D, \theta } (B_j \cap A) \leq \mP_{\cdot | D, \theta } (B_j )\leq (q+q^\prime) \max_{i \in I}  \mP_{\cdot | D, \theta }(   \{\hat{\beta}_i: ||\tilde{\beta}_i - \beta_i||_1 > \delta_i \} )
			\end{equation}
			
			Now notice, for all $i\in I$, by construction, $\tilde{\beta_i}=\hat{\beta_i}$, and so by lemma\autoref{thm2}, we know $ \mP_{\cdot | D, \theta }( \{\hat{\beta}_i: ||\tilde{\beta}_i - \beta_i||_1 > \delta_i \} ) \to 0$ which completes the proof.
			
			\item When $j=q$, reconsider $\hat{\beta} \in B_q \cap A$. As $ \tilde{r}(\hat{\beta})  \neq r$, the set $I \neq \emptyset$. We can therefore bound, similarly to \autoref{keytoprove}, that \begin{equation}
				\mP_{\cdot | D, \theta } (B_q \cap A) \leq(q+q^\prime)  \max_{i \in I}  \mP_{\cdot | D, \theta }( \{\hat{\beta}_i: ||\tilde{\beta}_i - \beta_i||_1 > \delta_i \} ) \to 0
			\end{equation}
			
		\end{enumerate}
	\end{proof}
Theorem\autoref{keythm2} offers some level of assurance in terms of consistency in rank estimation. In essence, the statement gives a strong probability convergence, globally over $\beta \in \mR^{p\times p}$, of the generalised rank $\tilde{r}$ to $r$. 
	
	\pagebreak
	
\subsection{Algorithms}

In general, the algorithms employ EM techniques (see \textcite{RRG2021} for further details on the EM techniques) to numerically approximate the estimators $(\hat{R}^{BE}, \hat{\sigma}^{BE})$ as per \autoref{R_BE2}. Write these as $(\hat{R}, \hat{\sigma})$ and the approximated rank $\hat{r}:=rank(\hat{R})$. The two algorithms differ by their approaches in $\lambda_0$ searching, in that algorithm 1 assumes the same value of $\lambda_{0,j}= \lambda_0 \forall j $ whereas algorithm 2 allows some degree of different increments of $\lambda_{0,j}$ and associated termination. Such difference is propagated through a uniform sampling on non-sparse coordinates.

\subsubsection{Algorithm 1}
	As for notation, we use $e_{t,j}(k)=  A_{t,j}-\beta_j(k)B_t= A_{t,j}-R(\cdot,j)(k)B_t$ to indicate the fitted $j$-th residual from the associated $k$-th iteration of the matrix estimate $R(k)$ for observations $A_t, B_t$. It is convenient to also use the following functions: $p_\theta^*$ and $\lambda_\theta^*$, both of which are $\mR$ to $\mR_{>0}$ mappings.
	\begin{align}
		p_\theta^*(x) & = \left( 1 + \frac{\lambda_0 (1-\theta)}{\lambda_1 \theta} \times \exp\left( |x| (\lambda_1-\lambda_0) \right) \right)^{-1} \\
		\lambda_\theta^*(x) & = \lambda_0 + (\lambda_1 - \lambda_0)     p_\theta^*(x)
	\end{align}
	
	In \autoref{RwSSLEM}, we initialise a dummy estimate $\hat{r}=p$ and iteratively increase $\lambda_0$ per EM iteration. After each of the EM iterations, we re-estimate the rank. The iteration stops when $\hat{r}$ gives the same rank for $n_r$ times. This is done by the imposition of vector $\tilde{r}$.

	\subsubsection{Algorithm 2}
	In \autoref{RwSSLEM2b}, we first use algorithm 1 to bring down the estimated rank to certain level of sparsity, here we state $\hat{r} < \frac{p}{2}$, but it may vary, certainly for large $p$ to some level such as $\hat{r} < \sqrt{p}$ or $\hat{r} < \frac{T}{p}$. After brining down the estimated rank, we implement an alternative approach where we sample $j\sim U(\{j : \hat{R}(\cdot,j) \neq  \boldsymbol{0}_p\})$, i.e. a coordinate sampled uniformly from the set of non-zero coordinates of the estimated rows. We then implement the EM iterations on this row only, and update rank estimate accordingly. The technical change would also add on a different notion of  $p_\theta, \lambda_\theta$ due to the different $\lambda_{0,j}$ across coordinates. These are re-defined below:
	\begin{align}
	p_\theta^*(x,j) & = \left( 1 + \frac{\lambda_{0,j} (1-\theta)}{\lambda_1 \theta} \times \exp\left( |x| (\lambda_1-\lambda_{0,j}) \right) \right)^{-1} \\
	\lambda_\theta^*(x,j) & = \lambda_{0,j} + (\lambda_1 - \lambda_{0,j})     p_\theta^*(x,j)
	\end{align}
	
	As the randomness (due to uniform sampling) exists, we may ensemble the results and provide a probabilistic output instead of a scalar estimate. For a set of seeds $S$, we may have estimates $\hat{r}(s)$ for $s\in S$. Statistics of these ranks, e.g. median and mean may offer insights into the approximate value of the estimated ranks.

	
	\begin{center}
		\begin{algorithm}[H]
			\caption{Column-wise SSL through EM}
			\label{RwSSLEM}
			\KwIn{Data A,B ;
				\\ Penalisation parameters $\lambda_0, \lambda_1, \Delta_\lambda$ ;
				\\ Initialisers $\theta(0), \beta(0), (\sigma_j^2(0))_{j\in[p]}$; 
				\\ Halting controls $K, \epsilon_{cap}, n_r$; 
				\\ Parametric specifications for prior on $\theta_j$: $a_j$, $b_j$}
			\KwOut{Estimated ranks $ \hat{r}$ and matrix $\hat{R}$}
			Initialise estimation vector $\tilde{r}=(p+1)\times \mathbf{1}_{n_r}$  and estimate $\hat{r}=p$.
			
			Process data $A, B$ such that all rows are centered at zero and all rows of $B$ have variance $T$.
			
			Iterative process on rank determination, by increase of $\lambda_0$:
			
			\While{$\tilde{r} \neq \hat{r} \times  \mathbf{1}_{n_r} $}{
				
				Update $\lambda_0$ with $\lambda_0 + \Delta_\lambda$
				
				EM update on $(\beta, \theta, \sigma^2)$:
				
				\For{$j\in[p]$}{
					Initialise $\theta_j(0), \beta_j(0), \sigma_j^2(0)$, then
					
					\For{$k\in[K]$}{
						\begin{align*}
							\beta_j(k) = & \argmin_{\beta_j \in \mR^p} \left\{ \sum_{t\in [T]} e_{t,j}(k)^2 + 2 \sum_{i \in [p]} \sigma_j^2(k-1) \lambda_{\theta_j(k-1)}^*(\beta_{i,j}(k-1)) |\beta_{i,j}(k)| \right\} \\
							\theta_j(k) = &  \frac{a_j-1+ \sum_{i\in[p]} p_{\theta_j(k-1)}^* (\beta_{i,j}(k)) }{a_j+b_j+p-2} \\
							\sigma_j^2(k) =& \frac{\sum_{t\in [T]} e_{t,j}(k)^2}{T-2} \\
							\epsilon(k) =& ||\beta_j(k)-\beta_{j}(k-1)||_2 
						\end{align*}
					\If{$\epsilon(k) < \epsilon_{cap}$}{
							break the $k$ for-loop
						}
					}
					Collect $\hat{\beta_j} = \beta_j(k)$
					
				}
				
				Collect $\hat{r}=rank\left(\left[\hat{\beta_1} | ... | \hat{\beta_p}\right]\right)$
				
				Replace $\tilde{r}_{1:(n_r-1)}$ with $\tilde{r}_{2:n_r}$ and replace $\tilde{r}_{n_r}$ with $ \hat{r}$
			}
			Output estimated rank  $\hat{r}$ with the final matrix $\hat{R} = \left[\hat{\beta_1} | ... | \hat{\beta_p}\right]$
		\end{algorithm}
		
	\end{center}
	
	\nopagebreak[1]
	\begin{center}
		\begin{algorithm}[H]
			\caption{Column-wise SSL through EM with entry-wise $\lambda_0$ increment}
			\label{RwSSLEM2b}
			\KwIn{Data A,B ; Seed for uniform sampling;
				\\ Penalisation parameters $\lambda_0 = (\lambda_{0,j})_{j \in [p]}, \lambda_1, \Delta_\lambda, \delta_\lambda$ ;
				\\ Initialisers $\theta(0), \beta(0),  (\sigma_j^2(0))_{j\in[p]}$; 
				\\ Halting controls $K, \epsilon_{cap}, n_r$; 
				\\ Parametric specifications for prior on $\theta_j$: $a_j$, $b_j$
			}
			\KwOut{Estimated ranks $ \hat{r}$, matrix $\hat{R}$}
			
			Process data $A, B$ such that all rows are centered at zero and all rows of $B$ have variance $T$.
			
			Initialise $\hat{r} = p$
			
			\While{$\hat{r}\geq \frac{p}{2}$}{
							
				Use \autoref{RwSSLEM} to update $\lambda_0$ with vectorised $\lambda_0+\Delta_\lambda$ and EM update on $(\beta,\theta,\sigma^2)$ 
				
				Collect $\hat{r}(\lambda_0)=rank\left(\hat{R}(\lambda_0)\right)$

			}
			
			Initialise estimation vector $\tilde{r}= (\hat{r}+1)\times \mathbf{1}_{n_r}$
			
			Initialise set $P=\{j: \hat{R}(\cdot,j) \neq  \boldsymbol{0}_p \}$. Set seed for sampling.
			
			Iterative process on rank determination, by increase of $\lambda_{0,j}$ for random $j$:
			
			\While{$\tilde{r} \neq \hat{r} \times  \mathbf{1}_{n_r} $ and $\hat{r}\neq 0$}{
				
				Sample $j \sim U(P)$
				
				Update $\lambda_{0,j}$ with $\lambda_{0,j} + \delta_\lambda  |P|$
				
				Initialise $\theta_j(0), \beta_j(0), \sigma_j^2(0)$ and EM update on $(\beta_j, \theta_j, \sigma_j^2)$ with the following steps:
				
				\For{$k\in[K]$}{\vspace{-0.2cm}\begin{align*}
						\beta_j(k) = & \argmin_{\beta_j \in \mR^p} \left\{ \sum_{t\in [T]} e_{t,j}(k)^2 + 2 \sum_{i \in [p]} \sigma_j^2(k-1) \lambda_{\theta_j(k-1)}^*(\beta_{i,j}(k-1),j) |\beta_{i,j}(k)| \right\} \\
						\theta_j(k) = &  \frac{a_j-1+ \sum_{i\in[p]} p_{\theta_j(k-1)}^* (\beta_{i,j}(k),j) }{a_j+b_j+p-2} \\
						\sigma_j^2(k) =& \frac{\sum_{t\in [T]} e_{t,j}(k)^2}{T-2} \\
						\epsilon(k) =& ||\beta_j(k)-\beta_{j}(k-1)||_2 
					\end{align*}
					\If{$\epsilon(k) < \epsilon_{cap}$}{
						break the $k$ for-loop
					}
				}
				
				Collect $\hat{\beta_j} = \beta_j(k)$
				
				Obtain $\hat{R}(\lambda_0)$ by copying the previous  $\hat{R}(\lambda_0)$ but replace the $j$-th column with $\hat{\beta}_j$. 
				
				Collect $\hat{r}(\lambda_0)=rank\left(\hat{R}(\lambda_0)\right)$
				
				\If{$\hat{\beta_j} = \boldsymbol{0}_p$}{
					Update $P$ by $P \setminus \{j\}$
				}
				
				Replace $\tilde{r}_{1:(n_r-1)}$ with $\tilde{r}_{2:n_r}$ and replace $\tilde{r}_{n_r}$ with $\hat{r}(\lambda_0)$
			}
			
			Output estimated rank  $\hat{r}$ with the final matrix $\hat{R}$ and paths.

		\end{algorithm}
		
	\end{center}

	
\section{Simulation}
\subsection{Outline of Simulated Examples}
The principle idea of the simulation task is to generate some big VAR(1) process and run algorithms 1 and 2 and collect the estimated ranks. The proposed VAR(1) process takes the following form
\begin{align}
	\Phi := & (J(0,r) \oplus I_{(p-r)} ) + \mathbf{1}_{r, r+1}  \\
	Y_t := & \Phi Y_{t-1} + \varepsilon_t , \ \ \varepsilon_t \sim N(0,\sigma^2I_p) \  \ \ \forall t \in [T] \label{Simu_DGP}
\end{align}
with initialisation $Y_0=0$. $J(0,r)$ refers to the Jordan Block of dimension $r$ with diagonal taking value 0. If we write \autoref{Simu_DGP} in the form of \autoref{VECM}, then $$\Pi = \Phi-I_p = (J(0,r) - I_{r})  \oplus 0_{(p-r)} + \mathbf{1}_{r, r+1} $$
it is easy to see that $rank(\Pi) = r$.

For each tuple $(p,r,T)$, we generate $S=100$ samples all satisfying $Y(s) \sim I(1)$ (checked by ADF test at 5\% significance level), then proceed with the relevant algorithms to collect ranks. As algorithm 2 gives a probabilistic inference on the rank instead of a numerical estimate, we instead enquire algorithm 2 100 times per sample, and take the mean as the estimated rank. The general algorithm is drawn below.

\begin{center}
	\begin{algorithm}
		\caption{General algorithm for simulated examples}
		\label{SimuAlgo}
		\KwIn{$T,p,r$; and number of successfully simulated samples $S=100$
		}
		\KwOut{Estimated ranks $ \hat{r}(i)$ for $i\in I$ with $|I| = S$}
		Initialise $\texttt{successful\_seed}=0$ and seed $s=0$
		
		Create $\Phi$
		
		\While{$\texttt{successful\_seed} \leq S$}{
			Generate $\varepsilon_t(s) \sim N(0,\sigma^2I_p)$ in accordance with the seed $s$
			
			Compute the corresponding $Y_t(s) , t\in [T]$
			
			Use standard ADF test to test for $Y(s) \sim I(1)$. If failed at $5\%$ significance level, we reject the sample and re-draw by replacing $s$ by $s+1$.
			
			Run the appropriate algorithm and report $ \hat{r}(s)$, reaplce $\texttt{successful\_seed}$ by $\texttt{successful\_seed}+1$.
		}
		
	\end{algorithm}
	
\end{center}

\subsection{Summary of Results}

\subsubsection{Algorithm 1}
For algorithm 1, we simulate two sets of regime. In the first regime, we consider $p=10, T=100$, and we scan the cases for $r\in [5]$. In the second regime, we consider $p=30, T=300$, and we scan the cases for $r\in [5]$.

\autoref{3A} shows the percentage correct obtained under these regimes, plotted over the true $r$. For instance, when $r=0$, algorithm 1 detects 100\% correct in terms of $\hat{r}=0$ for both $p=10$ and $p=30$, whereas when $r=5$, the algorithm detects only 65\% correct in terms of  $\hat{r}=5$ for $p=30$, and 86\% correct for $p=10$.

\begin{figure}[H]
    \centering
    \includegraphics[width=0.8\columnwidth]{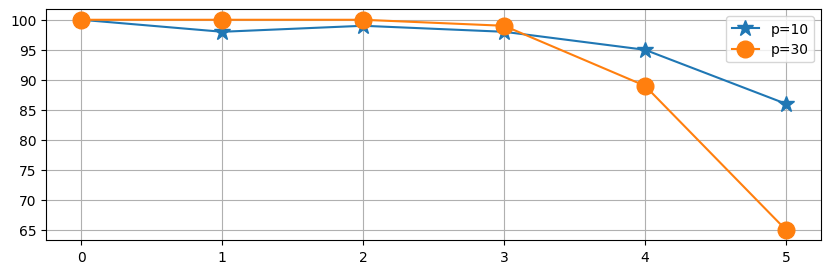}
    \caption{Percentage Correct (Accuracy of rank estimation)}
    \label{3A}
\end{figure}
\begin{figure}[H]
    \centering
    \includegraphics[width=0.8\columnwidth]{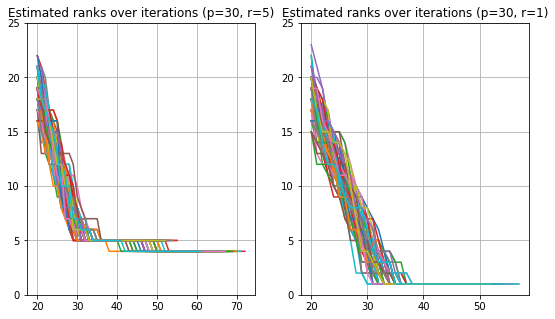}
        \caption{Estimated ranks over iterations of $\lambda_{0,j}$ increments}
    \label{3B}
\end{figure}

\autoref{3B} investigates potential issues arising from higher-rank under-penalisation. Particularly, we focus on the case of $p=30, r=5$ against the case of $p=30, r=1$. The latter got 100\% correct in accuracy, whereas the earlier performs poorly. As shown on the iteration graph on the left, while many terminates when $\hat{r}=5$, a good portion dropped to $\hat{r}=4$ and terminated there. This means the iterative increase of $\lambda_0$ has over-penalised the matrix, leading to a over-sparsified estimation of some rows --- leading to an under-estimation of ranks.

The promising result is on the right --- under more sparsified regimes, we can see all ranks converging to 1 and terminating there, which means no over-peanlisation by SSL specifications.

\subsubsection{Algorithm 2}
For algorithm 2, we demonstrate the advancement from algorithm 1 in the following two ways: better performance under the high-dimensional ($p=30, T=300$) regime, and the ability to handle even higher-dimensional ($p=100, T=500$) regime. In both regimes, we experiment  $r\in \{1, 5\}$. 

An overall summary for approximate percentage correct is illustrated in  \autoref{tab:T1}. The first column specifies the regime set-up and algorithm in comparison, whereas the second and third column give the definition of percentage correct. As indicated in the algorithms, the algorithm 2 only provide probabilistic estimation instead of scalar estimation of the ranks. Here, we take the average of all 100 different randomisation --- therefore many estimations may not be an integer, and hence the need to draw some notion of approximation. The second column of the table indicates percentage correct of falling into an estimation between $r-\frac{p}{100}$ and $r+\frac{p}{100}$. For instance, when $p=30,r=5$, algorithm 1 gives 65 estimations correctly within the 4.7-5.3 range, whereas algorithm 2 gives 94 estimations correctly within the same range. The algorithm 2 further estimate all scenarios correctly when we relax the range into between $r-\frac{p}{50}$ and $r+\frac{p}{50}$.

\begin{table}[h]
	\centering
	\begin{tabular}{c|c|c}
		$p,r$  &  $r \pm \frac{p}{100}$   & $r \pm \frac{p}{50}$ \\ \hline 
		$p=30, r=1$  (algorithms 1 and 2)   &  100 & 100 \\ \hline
		$p=30, r=5$  (algorithm 1)   &  65 & 65 \\ \hline
		$p=30, r=5$  (algorithm 2)   &  94 & 100 \\ \hline
		
		$p=100, r=1$  (algorithm 2)  &  84 & 100 \\ \hline
		$p=100, r=5$   (algorithm 2) &  100 & 100 \\ \hline
		
	\end{tabular}
	\caption{Approximate percentage correct by definitions}
	\label{tab:T1}
\end{table}
\begin{figure}[H]
	\centering
	\includegraphics[width=0.8\columnwidth]{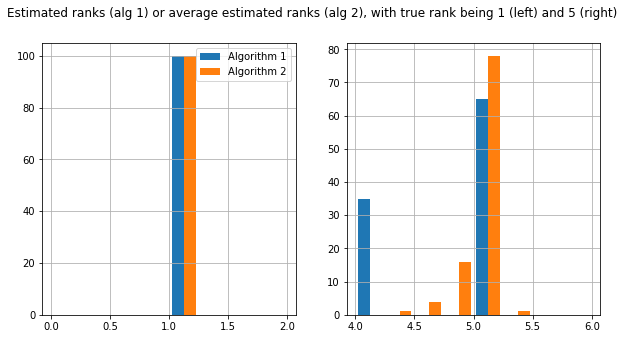}	
	\caption{Histograms of estimated ranks over samples for $p=30, T=300$}
	\label{3C1}
\end{figure}

As an additional comparison between algorithm 1 and 2, we show the histogram of estimated $\hat{r}$ across the 100 samples in \autoref{3C1} . The blue histograms indicate estimations given by algorithm 1 --- 100\% correctly in the case of $r=1$ whereas having a 35\% incorrect estimation ($\hat{r}=4$) when $r=5$. Algorithm 2 picks up the incorrectness in a more probabilistic manner --- hence the average being more concentrated around 5, resulting in more accurate results in \autoref{tab:T1}.
\begin{figure}[h]
	\centering
	\includegraphics[width=0.8\columnwidth]{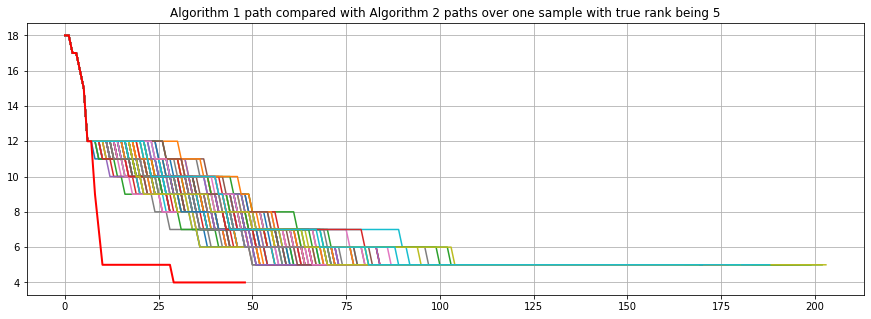}
	
	\includegraphics[width=0.8\columnwidth]{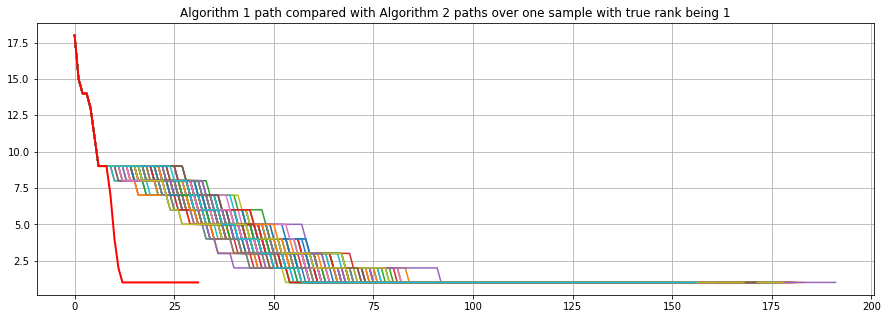}
	\caption{Estimated ranks over iterations of $\lambda_{0,j}$ increments}
	\label{3B2}
\end{figure}

\begin{figure}[h]
	\centering
	\includegraphics[width=0.8\columnwidth]{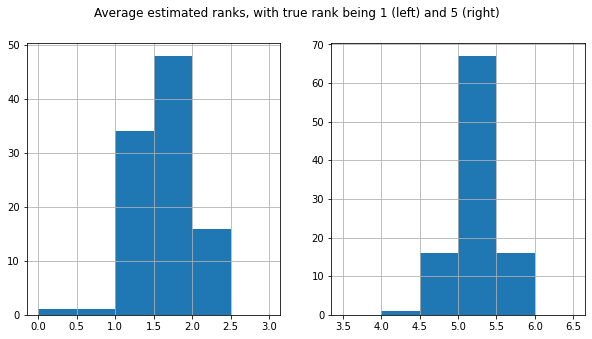}
	
	\caption{Histograms of estimated ranks over samples for $p=100, T=500$}
	\label{3C2}
\end{figure}

As an interactive showcase between the two algorithms, we draw one sample in the case of $r=5$ and $r=1$ and show the iteration paths from both algorithms in \autoref{3B2}. Algorithm 1 is highlighted in red. As per previous narratives, from the top graph, we can see the drop into $\hat{r}=4$ which concludes its fate into over-penalisation. For algorithm 2, the start of the iteration trajectories are exactly the same as algorithm 1 up to the iteration where $\hat{r}=12$. It then starts the randomised paths, resulting in potentially different pathways to draw down to, eventually, the level of $\hat{r}=5$ or $6$. Most of them eventually terminated at  $\hat{r}=5$. A similar story may be seen for the lower graph where termination occurred mostly at $\hat{r}=1$.

The cases where $p=100, T=500$ are presented in  \autoref{3C2} for histograms and  \autoref{3E} for one sample path. As can be seen, there are significant concentration at around the true rank for both cases --- more significantly so for $r=5$. This corroborates the summative statistics provided back in \autoref{tab:T1}.

In \autoref{3E}, we show an example of the case where $r=5$ and also in comparison with the red line (algorithm 1) which fails to estimate the true rank by under-estimating it to $\hat{r}=0$. In this example, mean estimation from algorithm 2 is 4.91 with median as 5.0.

\begin{figure}[h]
    \centering
\includegraphics[width=0.8\columnwidth]{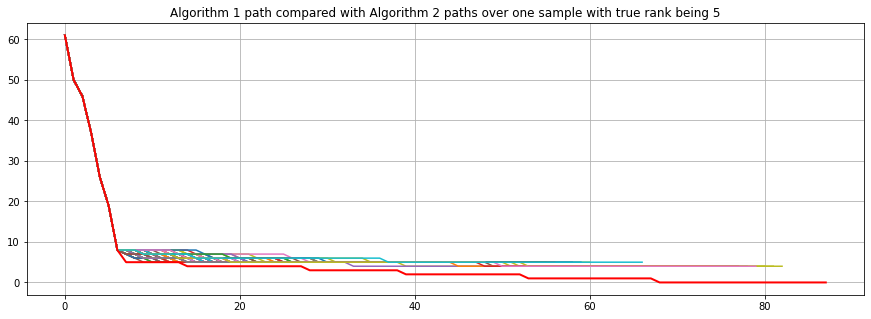}
    \caption{Estimated ranks over iterations of $\lambda_{0,j}$ increments}
    \label{3E}
\end{figure}

\section{Empirical application to stable portfolio}
The principle idea of this empirical application is to find a collection of unitary portfolio allocations $(\alpha_i)_i$ such that each portfolio $\alpha_i$ of assets has minimum variance --- hence denoted stable portfolio. By using cointegration analysis, it is expected that such minimum variance could persist not only in the training dataset, but also in the unseen testing dataset proceeding from the training. 

\subsection{Portfolio construction}
Recall in stock portfolio setting that $Y_t$ is a $p$-dimensional vector of stock price at time $t$, encompassing $p$ number of stocks. 
After obtaining  $\hat{R}$ and $\hat{r}$, the key idea is to use $\hat{R}$ to reconstruct $\hat{\Pi}$ and hence obtain $\hat{r}$ number of co-integrating vectors. These vectors, upon normalisation, yield exactly a portfolio which we may analyse, assuming a single dollar and zero-cost implementation on a long-short portfolio on the relevant universe of assets.

Let $\hat{\Pi}$ be the recovered $\Pi$ where $	\hat{\Pi} = (\tilde{S}\hat{R})^\intercal$.  Let $\hat{\Pi}_i$ be a non-zero row vector of $\hat{\Pi}$. A portfolio is therefore constructed with value at time $t$ being $\alpha_i Y_t$, where $\alpha_i := \frac{\hat{\Pi_i}}{||\hat{\Pi_i}||_1}$. The estimated rank $\hat{r}$ would henceforth be the number of such portfolios, as  $\alpha_1,..., \alpha_{\hat{r}}$ are linearly independent cointegration vectors.


\subsection{Data and Preparation}

We obtain daily stock prices of all individual stocks in SPX 500 as of end of September 2023. The price time series start from January 2022 to September 2023 (this gives 374 observations, though it also means $T<p$), and we use from the January 2022 to June 2023 for in-sample training, and July to September 2023 for out-of-sample testing. We use ADF test to check for I(1) non-stationarity. Out of the 500 stocks, 405 stocks are I(1). We are henceforth in a regime of $T=374$ and $p=405$.

The only pre-processing is to compute for cumulative return in relation to the start of the dataset, that is, to have $\frac{Y(t,j)}{Y(1,j)}$ as the time series for investigation.

\subsection{Main Results}

We run 100 repetitions of algorithm 2 to show various portfolio results, and then illustrate a number of implied portfolios from one of the results in algorithm 2 to demonstrate the performance and insights.

\subsubsection{Inference on estimated rank in algorithm 2}
Because of the probabilistic nature of \autoref{RwSSLEM2b}, we are able to produce slightly different $\hat{\Pi}$ upon different seeds for the uniform distribution, and hence potentially different $\hat{r}$ owe to different timings of termination of algorithms and different compositions of $\lambda_0$ vector. For a set of seeds $S^{seed}$, we are able to obtain $\{\hat{r}(s) : s \in S^{seed}\}$ which serves as an inference on $\hat{r}$.

In \autoref{fig:RI}, we plot the estimated ranks over iterations on the left, and the histogram of $\{\hat{r}(s) : s \in S\}$ on the right. An average rank of 1.7 is being estimated, as highlighted in the red verical line of the histogram.

\begin{figure}[h]
    \centering
    \includegraphics[width=0.47\columnwidth]{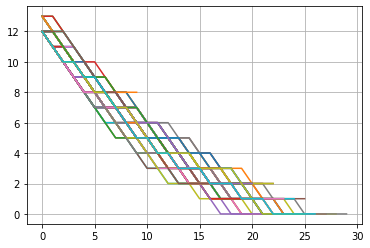}
      \includegraphics[width=0.47\columnwidth]{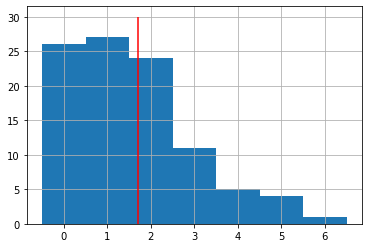}  
    \caption{Path of estimated ranks over iterations of $\lambda_{0,j}$ increments (left); Histogram of estimated ranks over different seeds (right)}
    \label{fig:RI}
\end{figure}

\subsubsection{General training and testing performance}
For each seed $s\in S^{seed}$, as long as $\hat{r}(s)>0$, we have  $\hat{r}(s)$ amount of portfolios implied from the cointegration relationships. We evaluate the stability of the portfolio by measuring the standard deviation of daily returns (in percentage), commonly known as `portfolio volatility' in financial literature.

In \autoref{fig:traintest}, we plot histograms of the training and testing volatilities of all such portfolios over all seeds, which gives 170 portfolios as the average rank is 1.7. The values of training and testing volatilities are much lower than the underlining index in both training and testing (see \autoref{portfolioperformance} for the index and equally-weighted portfolio volatility in the corresponding periods).

This has demonstrated the ability of cointegrated portfolio to reduce variance by a huge portion. 

\begin{figure}[h]
    \centering
    \includegraphics[width=0.7\columnwidth]{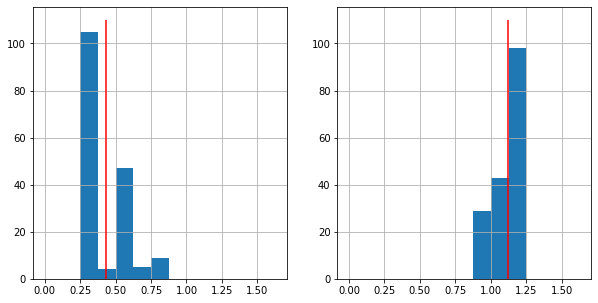}
    \caption{Portfolio volatility in training (left) and testing (right) over the 170 implied portfolios}
    \label{fig:traintest}
\end{figure}

\pagebreak

\subsubsection{Example portfolio performance  and insight}
\begin{table}[H]
	\centering
	\begin{tabular}{|l|r|r|}
		\hline
		Portfolio &  Training & Testing \\ \hline
		SPY Index &   18.4 &   10.8       \\ \hline 
		Equally Weighted & 18.2 &  10.8    \\ \hline 
		Cointegration 1 &  0.4 &  1.1  \\  \hline  
		Cointegration 2 &  0.3 &  1.1   \\  \hline  
		Cointegration 3 &  0.9 &  1.0   \\  \hline  
		
	\end{tabular}
	\caption{Training and Testing period standard deviation of daily returns (in percentage)}
	\label{portfolioperformance}
\end{table}

We select a particular seed where $\hat{r}(s)=3$ and illustrate the results from the corresponding 3 portfolios. In \autoref{portfolioperformance}, we report the portfolio volatilities of the three portfolios --- there is a massive decrease of volatilities from the underlining index / equally weighted portfolio. We plot the cumulative returns in \autoref{port-cr} for training period (upper) and testing period (lower), and there is a clear and meaningful stability in contrast to the index and equally weighted portfolio.

\begin{figure}[h]
    \centering
    \includegraphics[width=\columnwidth]{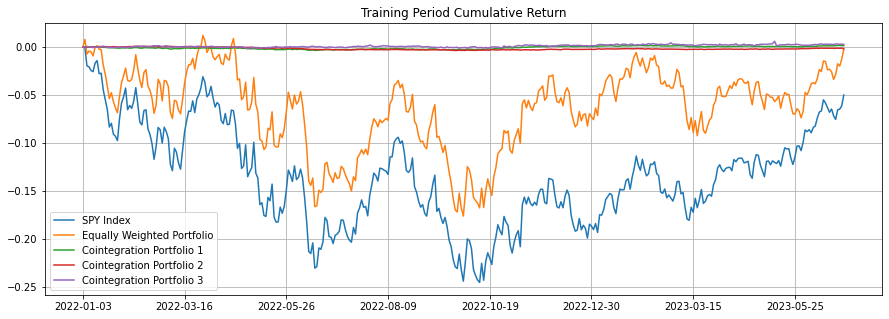}
    \includegraphics[width=\columnwidth]{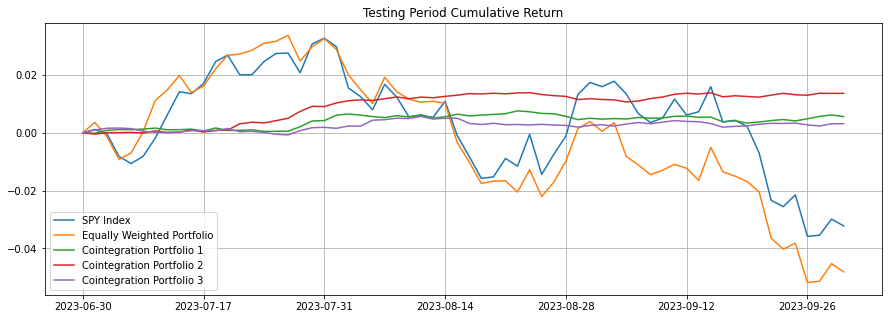}
    \caption{Cumulative return plots.}
    \label{port-cr}
\end{figure}

As to the portfolio insight, we investigate the top 25 stocks' weighting of each of the 3 portfolios in the first three graphs of \autoref{port-ins}. These weights are also plotted against the SPY index and horizontal lines indicating the equally-weighted weights. We can see that though some handful similarities, the main long-short focus of the portfolios differ significantly to the SPY index. An alternative graph at the bottom illustrates the weighting of the top 25 stocks from SPY in the respective portfolios.This also supports the insight that the focus of the cointegrated portfolios has not been on the highly capitalised stocks, compared to SPY.

\begin{figure}[p]
    \centering
    \includegraphics[width=\columnwidth]{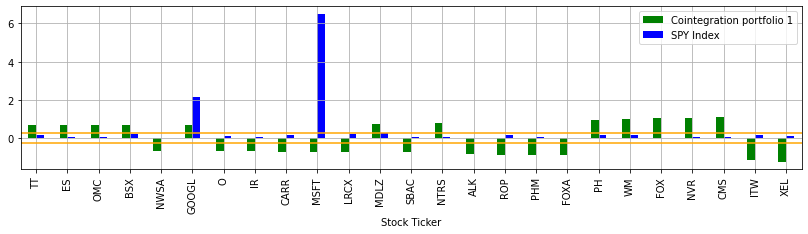}
     \includegraphics[width=\columnwidth]{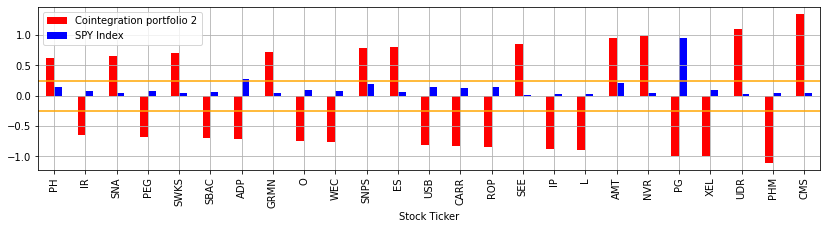}
        \includegraphics[width=\columnwidth]{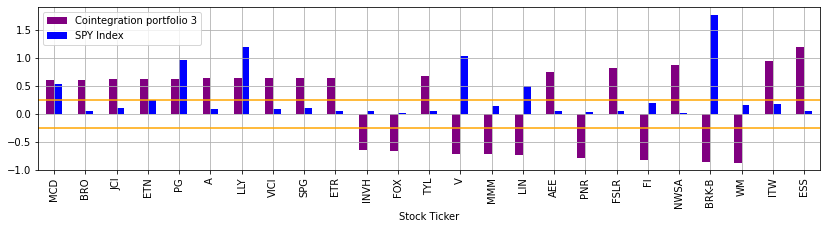}
           \includegraphics[width=\columnwidth]{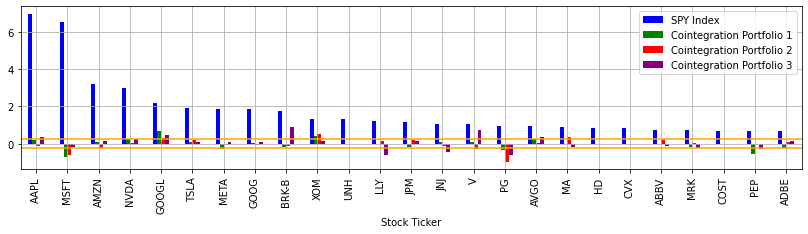}
    \caption{Portfolio Insights.}
    \label{port-ins}
\end{figure}

\pagebreak

\section{Conclusion}

\subsection{Discussions}
Though algorithm 2 has the drawbacks of requiring some moderate amount of sparsity assumption, it significantly outperforms algorithm 1 by halting the $\lambda_0$ increment appropriately. Moreover, algorithm 2 enables some probabilistic inference on the estimation $\hat{r}$, in the sense that $j \sim U(P)$, i.e. $\{\lambda_{0,j}: j \in [P]\}$ being attributed on some randomised size each time the iteration goes. This adds a further Bayesian layer in the modelling hierarchy, and gives an interpretable histogram as one designs the number of repetitions to run the algorithm.

In terms of the higher dimensional investigation, we reached to $p=100, T=500$ in simulation and $p=405, T=374$ in empirical application. It is worth appreciating that algorithm 2 scales down computing time for large $p$, as it only requires one $\beta_j$ update per iteration due to the sampled index, instead of mass update of the whole matrix as per algorithm 1. A compute-time relevant adjustment is also available at line 3 --- in essence, one may wish to tune down the sparsity assumption (such as subject to an identifiability constraint $pr<T$) and therefore increase $\Delta_\lambda$ to achieve faster runtime. This is not possible to be done in algorithm 1.

A number of future works are thought to be helpful based on the current investigation: we may investigate a even larger number of features, either in the numerical sense (e.g. $p>1000$) or in the asymptotic sense of $p = O(T)$. The asymptotic inference may also be drawn towards approximate correctness, such as  $\hat{r}/r \to 1$ with more generic limits such as $q, r\to\infty$. This then leads to space for investigating even higher dimensional cointegration in applied fields, such as portfolio construction and adjustments. It is also worth extending the SSL-VECM Bayesian model on covariance matrices, which may play a role in some cointegrated relationships. Though, such covariance may be sparse, adding another level of difficulty for the model estimation.

\subsection{Contribution}

There are mainly three forces of contributions being brought from this paper. First, similar to \textcite{LiangSchienle19} and \textcite{ChenSchienle22}, we extend the `traditional identifiability constraint' from $p^2<T$ to $pr \leq T$, in the sense that through penalisation, one does not necessarily need more observations than unknowns to estimate the cointegration matrix and thereafter the number of cointegration relationships in the system. Instead, through SSL, we proposed nuanced methodologies to estimate $r$ and also showed some level of asymptotic convergence in a Bayesian setting.

Secondly, we used various simulation regimes to test our proposed algorithms. This furthers the SSL executions from optimally finding a $p$ dimensional sparse vector to finding a $p\times p$ dimensional sparse matrix and determining its true rank. Particularly, the termination of $\lambda_0$ increment varies substantially from existing Bayesian literature, as we quantify further the termination criteria instead of a generic guidance on convergence-termination. This contributes to the implementation of SSL and furthered the agenda to use Bayesian modelling as a `pragmatic frequentist' on penalisation.

Lastly, the application to portfolio construction contributes to the relevant financial literature. Our algorithms yield various independent and insightful portfolios that yield low volatilities, as cointegration theory would presume, which invite further studies on topics such as efficient heding and portfolio construction.

\pagebreak

\appendix

\section{Technical appendix}

\subsection{Data processing}\label{DP}

Recall that our data, in matrix form, can be written as $A=[\Delta Y_1, ..., \Delta Y_T] = [A_1,...,A_T] \in \mR^{p \times T}$ and $B = [\tilde{S}^T Y_0,..., \tilde{S}^T Y_{T-1}] = [B_1,...,B_T]\in \mR^{p \times T}$ where $A_t, B_t \in \mR^{p \times 1}$. The data processing tasks concerns transformation from $A$ to $\tilde{A}$ and $B$ to $\tilde{B}$ such that
\begin{itemize}
    \item $\tilde{A}$ is centralised, i.e. $\sum_{t\in[T]} \tilde{A_t} = \boldmath{0}_p$. This can be done by \begin{equation}
        \tilde{A}_t = A_t - \frac{\sum_{t\in[T]} A_t}{T}
    \end{equation}
    \item  $\tilde{B}$ is centralised and with $\sqrt{T}$ of standard deviation each row. This can be done by \begin{equation}
        \tilde{B}_{t,j} = \sqrt{T} \frac{B_{t,j}-\mu_{B,j}}{\sigma_{B,j}} \text{ where } \mu_{B,j} = \frac{\sum_{t\in[T]} B_{t,j}}{T} \text{ and }
        \sigma_{B,j} = \sqrt{\sum_{t\in [T]} \frac{(B_{t,j} - \mu_{B,j})^2}{T} }
    \end{equation}
    
\end{itemize}

\subsection{Derivation to \autoref{R_BE}}\label{DBE}
First, we note $\pi(\phi|D, D_0) \propto \pi(D|\phi, D_0) \pi(\phi)$ where $\pi(D|\phi, D_0) $ is the likelihood conditional on some initialisation $D_0$ 
and $\pi(\phi)$ is the prior of $\phi$. $\propto$ refers to proportional up to some constants not involving $\phi$. 

Now, we can write \begin{align}
\pi(\phi|D, D_0) &\propto \pi(D|R, \sigma^2, D_0) \pi(\sigma^2) \pi(R) \\
  &= \left( \prod_{t\in[T]} \pi(	\tilde{A}_{t}|\tilde{B}_t, R, \sigma^2) \right ) \pi(\sigma^2) \left(  \prod_{j\in[p]} \sum_{\gamma_j \in \{0,1\}} \pi(R(\cdot, j) | \gamma_j) \pi(\gamma_j | \theta_j)  \right)
\end{align}

We further observe, for any $t\in [T]$, $j\in[p]$, \begin{align}
	\pi(	\tilde{A}_{t}|\tilde{B}_t, R, \sigma^2) &\propto \sigma^{-p} \exp\left(-\frac{||e_t(R)||_2^2}{2\sigma^2} \right)  \\
	\sum_{\gamma_j \in \{0,1\}} \pi(R(\cdot, j) | \gamma_j) \pi(\gamma_j | \theta_j) & = (1-\theta_j)\psi_{0,j}(R(\cdot, j)) + \theta_j  \psi_1(R(\cdot, j)) 
	\\&\propto \exp(-\lambda_1 ||R(\cdot,j)||_1 ) \times \prod_{i\in[p]} \frac{1}{p_j^*(R(i,j),\theta_j)}
\end{align}

Henceforth combining everything, we have \begin{equation*}
	\pi(\phi|D, D_0) \propto\left( \prod_{t\in[T]}  \exp\left(-\frac{||e_t(R)||_2^2}{2\sigma^2} \right)\right) \sigma^{-(pT+2)}  \prod_{j\in[p]}  \left(  \exp(-\lambda_1 ||R(\cdot,j)||_1 ) \times \prod_{i\in[p]} \frac{1}{p_j^*(R(i,j),\theta_j)}  \right)
\end{equation*}

Then we take logarithm and obtain 
\begin{equation}
		(\hat{R}^{BE}, \hat{\sigma}^{BE}) = \argmin_{R\in \mR^{p\times p} 
			, \sigma \in \mR_{> 0}} - \log(\pi(\phi|D, D_0) )
\end{equation}
to the desired \autoref{R_BE}.
\subsection{Derivation to \autoref{R_BE2}}\label{DBE2}
Firstly, breaking down we obtain
 \begin{align}
	\pi(\phi|D, D_0) &\propto 
	 \left( \prod_{t\in[T]} \prod_{j\in[p]} \pi(	\tilde{A}_{t,j}|\tilde{B}_t, R, \sigma_j^2) \right) \left( \prod_{j\in[p]} \pi(\sigma_j^2)  \right)   \left( \prod_{j\in[p]} \left( \int_{(0,1)} \pi(R(\cdot,j)|\theta_j)\pi(\theta_j) d\theta_j  \right) \right)
\end{align}
Similar to Appendix \ref{DBE}, we proceed to observe, for any $t\in [T]$, $j\in[p]$,
\begin{align}
	\pi(\tilde{A}_{t,j}|\tilde{B}_t, R, \sigma_j^2) &\propto \sigma_j^{-1} \exp\left(-\frac{(e_t(R))_j^2}{2\sigma_j^2} \right)  \\
	&=\sigma_j^{-1} \exp(-\epsilon(j)^2_t) \\
	\pi(R(\cdot,j)|\theta_j)\pi(\theta_j)  & = ((1-\theta_j)\psi_{0,j}(R(\cdot, j)) + \theta_j  \psi_1(R(\cdot, j))) \pi(\theta_j) 
	\\&= \exp(-\lambda_1 ||R(\cdot,j)||_1 ) \times \prod_{i\in[p]} \frac{\theta_j\pi(\theta_j)}{p_j^*(R(i,j),\theta_j)}
\end{align}
Therefore 
 \begin{align}
	\pi(\phi|D, D_0) &\propto 
	\left( \prod_{j\in[p]} \exp(-||\epsilon(j)||_2^{2}) \sigma_j^{-(T+2)} \right)  \left( \prod_{j\in[p]} \left( \exp(-\lambda_1 ||R(\cdot,j)||_1 ) \times \int_{(0,1)}   \prod_{i\in[p]} \frac{\theta_j\pi(\theta_j)}{p_j^*(R(i,j),\theta_j)} d\theta_j  \right) \right)
\end{align}

Then we take logarithm and obtain 
\begin{equation}
		(\hat{R}^{BE}, (\hat{\sigma}_j^{BE})_{j\in[p]}) = \argmin_{R\in \mR^{p\times p} 
		, ({\sigma}_j^{BE})_{j\in[p]} \in \mR^p_{> 0}} - \log(\pi(\phi|D, D_0) )
\end{equation}
to the desired \autoref{R_BE2}.

\pagebreak
\printbibliography
\end{document}